\newcommand\copyrighttext{%
  \centering
  Accepted for the 20th European Control Conference 2022
  }
\newcommand\copyrightnotice{%
\begin{tikzpicture}[remember picture,overlay]
\node[anchor=south,yshift=10pt] at (current page.south) {\fbox{\parbox{\dimexpr\textwidth-\fboxsep-\fboxrule\relax}{\copyrighttext}}};
\end{tikzpicture}%
}
\newtheorem{lemma}{Lemma}
\newtheorem{prop}{Proposition}
\DeclareMathOperator*{\argmin}{arg\,min}
\newcommand{\diff}{\mathrm{d}}
\newcommand{\e}{\mathrm{e}}
\title{\LARGE \bf
Input Sequence and Parameter Estimation in Impulsive Biomedical Models*
}
\author{H{\aa}kan Runvik$^{1}$ and Alexander Medvedev$^{1}$
\thanks{*This work is funded in part by the PhD program at the Centre for Interdisciplinary Mathematics, Uppsala University, Sweden and the Swedish Research Council Grant 2019-04451 for the project ``Synchronization and
entrainment in the impulsive Goodwin’s oscillator".}
\thanks{$^{1}$H. Runvik and A. Medvedev are  with the department of Information Technology, Uppsala University, Uppsala, Sweden,
        {\tt\small \{hakan.runvik, alexander.medvedev\}@it.uu.se}}%
}
\begin{document}

\maketitle
\copyrightnotice

\thispagestyle{empty}
\pagestyle{empty}

\begin{abstract}
A hybrid model for biomedical time series comprising a continuous second-order linear time-invariant system driven by an  input sequence of positively weighted Dirac delta-functions is considered. 
The problem of the joint estimation of the input sequence and the continuous system parameters from output measurements is investigated. A solution that builds upon and refines a previously published least-squares formulation is proposed. Based on a  thorough analysis of the properties of the least-squares solution, improvements in terms of accuracy and ease of use are achieved on synthetic data, compared to  the original algorithm.
\end{abstract}

\section{INTRODUCTION}
Signals exhibiting slow, dissipative dynamics that are interrupted by multiple rapid bursts occur in many biological systems. Common examples are found in e.g.  endocrinology, since pulsatility is recognized as a fundamental property in the secretion of most hormones \cite{vkp08}. In pharmacokinetics, multi-peaking phenomena in drug concentration \cite{dt10} can also display such characteristics. 
There is no generally accepted approach to mathematical modeling of these behaviors.
In the endocrine case, a popular construct features a linear plant to portray the hormone elimination fed with an input signal that represents the secretion episodes. For instance, a Gaussian  input signal shape is assumed in \cite{jp09}, which enables deconvolution-based input estimation.

To avoid additional assumptions, a pulsatile time series is modeled by a linear plant with impulsive input in this work. In closed loop, this setup was developed for modeling testosterone regulation in \cite{cms09}, while a similar model was employed for pharmacokinetic applications in \cite{rmk20}. The estimation of the input sequence and continuous plant parameters is treated. Least-squares (LS) methods were previously used to address this hybrid identification problem \cite{hm10}, \cite{MM:13}, while a Laguerre domain approach was employed for the input estimation in \cite{rm20}. The present work is based on the same optimization formulation as \cite{MM:13}, where LASSO (least absolute shrinkage and selection operator) regularization was used. Yet, a more rigorous estimation procedure is achieved based on a comprehensive analysis of the underlying optimization problem. The main contribution is in the optimization problem analysis that underpins the theoretical foundation of the identification approach. Further, the resulting estimation method does not require user-defined data-dependent parameters and displays better performance as well as ease of implementation.

The rest of the paper is organized as follows. First, the model and estimation problem are formulated. Then, an analysis of the parameter-dependent characteristics of the LS solution is performed and shown to enable an efficient estimation of the parameters in the noise-free case. Finally, the method is generalized to account for noise and uncertainties and experimental results for synthetic data are presented.

\section{ESTIMATION PROBLEM}
\subsection{Model description}
Consider the impulsive sequence 
\begin{equation} \label{eq:input}
    \xi(t)=\sum_{k=0}^{\infty}d_k \delta\left(t-\tau_k\right),
\end{equation}
where $\delta(\cdot)$ is the Dirac delta function and $d_k$ and $\tau_k$ determine the positive impulse weights and times. It is fed into a linear time-invariant compartmental state-space model 
\begin{equation} \label{eq:sys}
    \dot{x}=Ax+B\xi(t), \quad y = Cx, \quad x=\begin{bmatrix}x_1 &x_2 \end{bmatrix}^\intercal,
\end{equation}
where
\begin{equation} \label{eq:sysmatr}
    A= \begin{bmatrix}
    -b_1 & 0 \\
    g_1 & -b_2
    \end{bmatrix}, \quad B=\begin{bmatrix}
    1 \\ 0
    \end{bmatrix}, \quad C=\begin{bmatrix} 0 \\ 1 \end{bmatrix} ^\intercal,
\end{equation}
with positive parameters $b_1$, $b_2$, $g_1$. 
Defining the Heaviside step function as $H(t)$ and assuming the initial state $x(t_0)=x_0$, the output of the system can in a straightforward manner be calculated as
\begin{multline*}
    y(t) =C\Big( e^{A(t-t_0)}x_0 + \int_{t_0}^t \e^{A(t-\tau)}B \xi(\tau)~\diff \tau \Big)\\
    = C e^{A(t-t_0)}x_0 + \sum_{k=0}^\infty g_1 d_k z(b_1, b_2, t-\tau_k),
\end{multline*}
where
\begin{equation*}
    z(b_1, b_2, t) = \frac{\e^{-b_2 t}-\e^{-b_1 t}}{b_1-b_2}H(t).
\end{equation*}

\subsection{Estimation problem formulation}
Let the output of \eqref{eq:sys} be sampled, possibly irregularly, over a finite horizon and result in the measurements $y(t_k)$, where $k=1,\dots,K$ and $t_k<t_{k+1}$, thus yielding the vector
\begin{equation*}
    Y = \begin{bmatrix} y(t_1) & \dots & y(t_K) \end{bmatrix}^\intercal.
\end{equation*}
Since an impulse in between two sampling times cannot be distinguished in the sampled output from a pair of impulses that occur at the sampling times \cite{MM:13}, the impulses are without loss of generality restricted to occur at the sampling times. 
Then it holds that
\begin{equation}\label{eq:matrixY}
    Y = \Phi(b_1, b_2) \theta,
\end{equation} 

where
\begin{equation*}
    \Phi(b_1, b_2)=\begin{bmatrix} \varphi(b_1, b_2, t_1) & \dots & \varphi(b_1, b_2, t_K) \end{bmatrix}^\intercal,
\end{equation*}

\begin{equation*}
    \varphi(b_1, b_2, t_i)=\begin{bmatrix}\e^{-b_2(t_i-t_1)}\\ z(b_1, b_2, t_i-t_1)\\ \vdots \\ z(b_1, b_2, t_i-t_K) \end{bmatrix},
\end{equation*}

\begin{equation*}
    \theta=\begin{bmatrix}x_2(t_1) & d_1 & \dots & d_{K-1} \end{bmatrix}^\intercal.
\end{equation*}
Notice that $\Phi(b_1, b_2)$ is square and that the state $x_2(t_1)$ is included in the formulation rather than $x_0$, since $x_2(t_1)$ and $d_1$ uniquely determine the state of the system for $t>t_1$.

Further, the combined impulse and parameter estimation in system \eqref{eq:sys} is treated, i.e. the parameters $d_k, k=1,\dots,K$ and $b_i, i=1,2$ are sought. Notice that $g_1=1$ can be assumed, as changing this parameter corresponds to scaling of the impulses. Furthermore, assume that $b_1<b_2$. 

An LS optimization formulation introduced in \cite{MM:13} is employed to the problem in hand. In the estimation,  $b_i^*, i=1,2$ denote the true parameter values 
while $b_i$ represent the parameters in the LS formulation 
\begin{equation}
\begin{aligned} \label{eq:opt}
\hat\theta(b_1, b_2) = \argmin_\theta & \|Y - \Phi(b_1, b_2) \theta \|^2 ,
\end{aligned}
\end{equation}
where $\|\cdot\|$ is the Euclidean vector norm. 
The parameter-dependent objective function makes the setup resemble a multi-parametric programming problem (see e.g. \cite{od16}).
We also use the notation $d_k^*$ for the true impulse weights while $d_k$ represent their estimates for given parameter values $b_1$ and $b_2$ (we suppress the dependency for ease of notation).

In the noise-free case, the optimization formulation is unconstrained. The estimate of $\theta$ can therefore be calculated via a  matrix inversion (the invertibility of $\Phi$ is shown in \cite{MM:13}). In the presence of noise or uncertainties, the impulse weights are restricted to be non-negative, which results in a constrained LS problem.

\subsection{Estimation principle}\label{sec:technique}
In the noise-free case, the estimation is based on the following properties of the optimization problem in \eqref{eq:opt}, which are given in Proposition~\ref{prop:impsign}.
\begin{itemize}
    \item If $b_1+b_2>b_1^*+b_2^*$ and $b_1>b_1^*$, all impulse estimates solving \eqref{eq:opt} have positive weights;
    \item If $b_1+b_2<b_1^*+b_2^*$ and $b_1<b_1^*$, all impulse estimates solving \eqref{eq:opt} that do not correspond to true impulses have negative weight.
\end{itemize}
The properties above give rise to the division of the parameter space depicted in the left subplot of Fig.~\ref{fig:optSol}. The idea is to utilize the structure of this space to identify $b_1^*$ and $b_2^*$. A problem arises in the regions marked as unknown, where the signs of the impulse weights solving the optimization problem vary depending on the data. However, theoretical reasoning regarding simplified cases (see Section~\ref{sec:boundary}) indicates that the parameter space will be divided qualitatively according to the right subplot of Fig.~\ref{fig:optSol}. This is also observed in numerical experiments. The quantitative behavior (i.e. the slopes of the curves) depends on the values of $b_1^*,b_2^*$ and the distribution of impulses and sampling instances. This refined partitioning enables navigation in the parameter space to the point $(b_1^*,b_2^*)$, as described in Section~\ref{sec:nonoiseest}.

\begin{figure}
\footnotesize
\begin{center}
\def\svgwidth{0.515\textwidth}
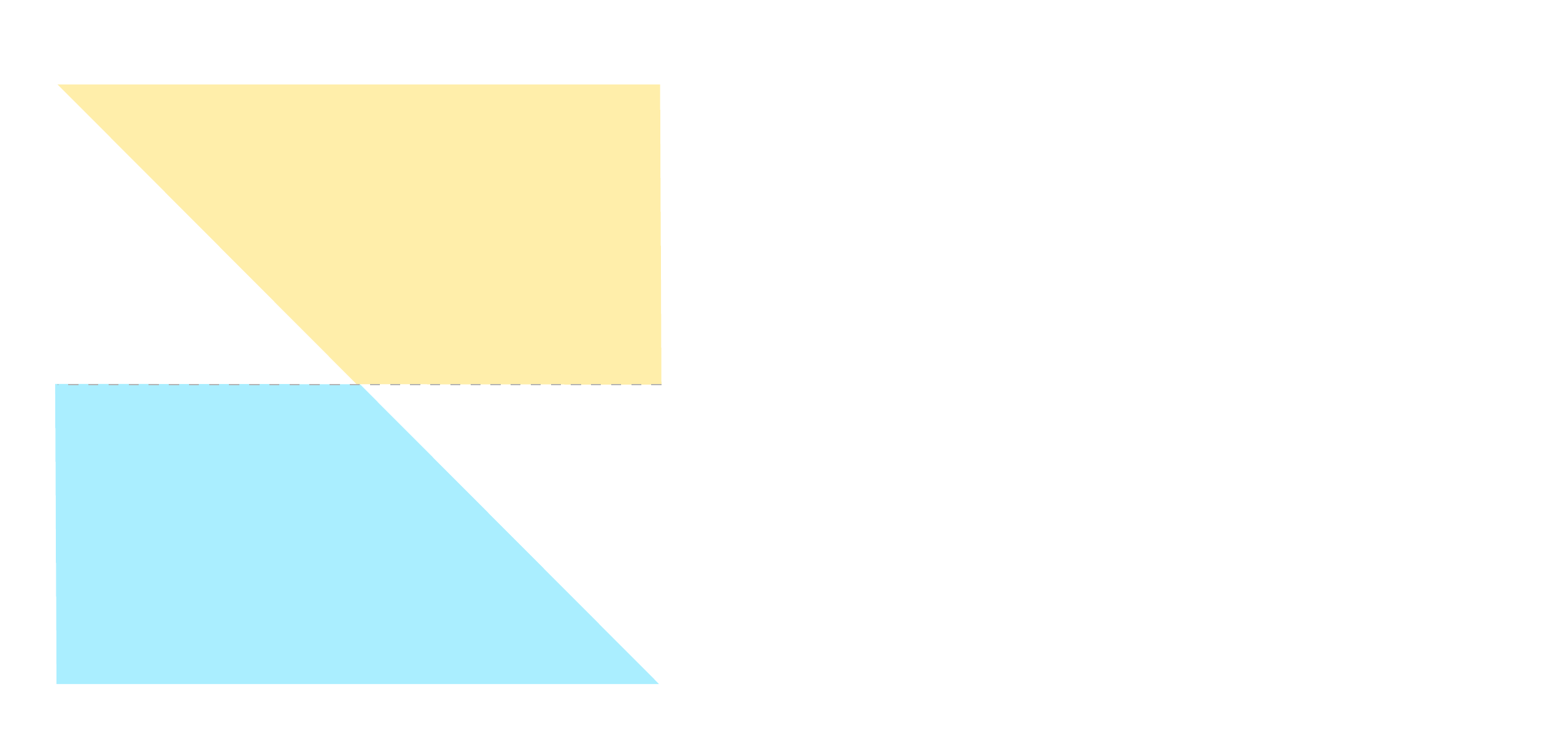
\caption{The division of the $b_1$-$b_2$ parameter space, relative to the true parameter values $(b_1^*,b_2^*)$, in terms of the solutions to \eqref{eq:opt}. Left: regions with guaranteed signs of impulse weights. Right: region boundaries according to the analysis in Section \ref{sec:boundary} and numerical experiments. ``Mixed'' indicates that both positive and negative impulses are present.}
\label{fig:optSol}
\end{center}
\end{figure}

When noise or uncertainties are present, the parameter space is no longer divided in well-defined regions, as impulses of both signs typically appear in most of the parameter space. However, it is still possible to estimate the boundary of the region with positive impulses, which we denote $\gamma_P$. If the noise level is low enough, it is also possible to find an approximation of the true parameter values.

\section{NOISE-FREE ESTIMATION}\label{sec:noisefree}
We will first show how the $b_1$-$b_2$ parameter space is divided into the regions indicated above. The analysis is based on the following lemma.

\begin{lemma}\label{lem:intersect}
Let $y(t)$ be the response of system \eqref{eq:sys} to the input $d_1\delta(t-t_1)$ with $x_0 = \begin{bmatrix}0&0\end{bmatrix}^\intercal$. Denote the response of an estimate of \eqref{eq:sys} with the parameters $\hat b_1$, $\hat b_2$ to the input $\hat d_1\delta(t-\hat t_1)$  as $\hat y(t)$. Assume that $\hat b_1+\hat b_2>b_1+b_2$, $\hat{b}_1>b_1$ and $\hat{b}_1>\hat b_2$. Then, there exist at most two $\tau>\max \{t_1, \hat t_1\}$ such that $y(\tau)=\hat y(\tau)$.
\end{lemma}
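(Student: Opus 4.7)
The plan is to count the zeros of $f(t):=y(t)-\hat y(t)$ on $(T,\infty)$, where $T:=\max\{t_1,\hat t_1\}$. First I would use the explicit formula for $z(b_1,b_2,t)$ given in the excerpt to write $y$ and $\hat y$ each as a difference of two decaying exponentials, so that $f(t)=\sum_{\lambda}c_\lambda e^{-\lambda t}$ becomes a generalized exponential polynomial in the four rates $\lambda\in\{b_1,b_2,\hat b_1,\hat b_2\}$. By the sign-variation rule for exponential sums (the Descartes--Polya theorem), the number of real zeros of $f$ is bounded by the number of sign variations in the sequence $(c_\lambda)$ after sorting by $\lambda$.

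Next, I would enumerate the orderings of $\{b_1,b_2,\hat b_1,\hat b_2\}$ compatible with the conventions $b_1<b_2$, $\hat b_2<\hat b_1$ and the hypotheses $\hat b_1>b_1$, $\hat b_1+\hat b_2>b_1+b_2$. The trace inequality rules out the ordering $\hat b_2<b_1<\hat b_1<b_2$, leaving four orderings to examine. In the two in which $b_1$ is the smallest exponent, namely $b_1<b_2<\hat b_2<\hat b_1$ and $b_1<\hat b_2<b_2<\hat b_1$, a direct calculation of the coefficient signs of $f$ produces the pattern $(+,-,-,+)$ with exactly two sign variations, and the claimed bound follows immediately.

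The remaining two orderings, $\hat b_2<b_1<b_2<\hat b_1$ and $b_1<\hat b_2<\hat b_1<b_2$, yield three sign variations, so the Descartes--Polya count alone is insufficient. In these cases I would pass to the logarithmic quotient $\phi(t):=\log\bigl(y(t)/\hat y(t)\bigr)$, which is well defined on $(T,\infty)$ because both $y$ and $\hat y$ are positive there, and whose zeros coincide with those of $f$. A direct computation shows that $(\log y)'(t)$ is monotone decreasing from $+\infty$ at $t=t_1^+$ to $-b_1$ as $t\to\infty$, and analogously for $\hat y$. Combining this with the boundary values of the difference $(\log y)'-(\log\hat y)'$---controlled near $T$ by the hypothesis $\hat b_1+\hat b_2>b_1+b_2$ and at infinity by the sign of $\hat b_2-b_1$---one shows that $\phi'$ has at most one zero, so $\phi$ is monotone or unimodal on $(T,\infty)$ and the equation $\phi=0$ admits at most two solutions. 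The principal obstacle is this last step: since the difference of two monotone-decreasing functions need not itself be monotone, the reduction of $\phi'$ to a single zero is most cleanly handled via the Wronskian $W:=y'\hat y-y\hat y'$ and the first-order linear ODE $W'+(b_1+b_2)W=y\,P$ that it satisfies, where $P$ is an explicit two-exponential sum whose single sign change is pinned down by the trace hypothesis $\hat b_1+\hat b_2>b_1+b_2$.
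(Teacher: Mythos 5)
Your strategy is genuinely different from the paper's (which never splits into orderings of the decay rates: it shows that at any tangency point $\ddot{\hat x}_2-\ddot x_2<0$, that a tangency excludes any further intersection, and then shrinks $\hat d_1$ to merge two of three hypothetical intersections into a tangency for a contradiction). Your first half is correct and, for the orderings it covers, arguably cleaner: writing $f=y-\hat y$ as a four-term exponential sum, the coefficient signs attached to $(b_1,b_2,\hat b_2,\hat b_1)$ are indeed $(+,-,-,+)$, the ordering $\hat b_2<b_1<\hat b_1<b_2$ is indeed excluded by the trace inequality, and the Descartes--P\'olya bound settles the two orderings in which $b_1$ is the smallest rate.

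The gap is exactly where you flag it, and it is not closed by the Wronskian device as described. The identity $W'+(b_1+b_2)W=yP$ with $P=\Delta\hat y'+(\hat b_1\hat b_2-b_1b_2)\hat y$ and $\Delta=\hat b_1+\hat b_2-b_1-b_2$ is correct, and $P$ is a two-exponential sum with at most one sign change. But all this yields is that $V=\e^{(b_1+b_2)t}W$ is unimodal, hence $W$ may still change sign \emph{twice}; then $\phi$ has two critical points and up to three zeros --- which is no improvement over the raw three-sign-change Descartes bound you are trying to beat. To conclude that $W$ changes sign at most once you must combine the direction of unimodality of $V$ with the signs of $W$ at $t=T^+$ (which is $\mathrm{sign}(t_1-\hat t_1)$, since $W(T^+)=d_1\hat y(T)$ or $-\hat d_1 y(T)$) and as $t\to\infty$ (which is $\mathrm{sign}(\hat b_2-b_1)$), and this case analysis is absent. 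It is nontrivial: e.g.\ in the ordering $\hat b_2<b_1<b_2<\hat b_1$ with $\hat t_1>t_1$ one has $W(T^+)<0$ and $W\to 0^-$, so a unimodal $V$ that first increases could a priori cross zero twice, and ruling this out (equivalently, ruling out a third intersection) needs a further quantitative estimate that neither the endpoint signs nor the parity of $f$ at the endpoints supplies. Until that step is supplied for both remaining orderings, the proof is incomplete precisely in the cases the paper's tangency-and-perturbation argument was designed to handle uniformly.
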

\begin{proof}
See Appendix \ref{app:intersect}.
\end{proof}

Provided that the impulse sequence is sparse (i.e. that the set $S$ below is nonempty), the result above can be used to characterize the properties of the solution to \eqref{eq:sys} in two cases, where the estimated system is either faster or slower than the true dynamics. These are defined in the following proposition.

\begin{prop} \label{prop:impsign}
Consider LS problem \eqref{eq:opt} with the initial condition $x_2(t_1)=0$.
Assume that the noise-free measurements $Y$ are produced by \eqref{eq:sys}   and let $S=\{k\in \{1,\dots,K\} \mid d_k^*=0\}$. If $b_1+b_2>b_1^*+b_2^*$ and $b_1>b_1^*$, then $d_k>0$ for $k=1,\dots K$. If $b_1+b_2<b_1^*+b_2^*$ and $b_1<b_1^*$, then $d_k<0$ for $k\in S$.
\end{prop}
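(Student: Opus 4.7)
The plan is to leverage the noise-free exact fit and Lemma~\ref{lem:intersect}. Since $\Phi(b_1,b_2)$ is invertible and $Y$ is generated by \eqref{eq:sys}, the LS solution of \eqref{eq:opt} satisfies $\hat y(t_k)=y(t_k)$ at every sample $t_k$. The hypothesis $x_2(t_1)=0$ makes the first sample equation reduce to $\hat x_2(t_1)=0$, so both the true and estimated outputs evolve from the zero state,
\[ y(t)=\sum_m d_m^*\,z(b_1^*,b_2^*,t-t_m),\qquad \hat y(t)=\sum_m d_m\,z(b_1,b_2,t-t_m), \]
and the error $e(t)=y(t)-\hat y(t)$ has forced zeros at $t_1,\dots,t_K$.

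For Case~1 ($b_1+b_2>b_1^*+b_2^*$, $b_1>b_1^*$), after relabeling $b_2$ into the role of Lemma~\ref{lem:intersect}'s $\hat b_1$ (so that $\hat b_1>\hat b_2$ is respected), the lemma's hypotheses are satisfied: the standing convention $b_1<b_2$ gives $b_2>b_1>b_1^*$, hence the lemma's $\hat b_1>b_1$ (paper $b_1^*$) also holds. I would then argue by contradiction. Suppose $d_j\le 0$ for some $j$, pick the earliest such $j$ (so $d_1,\dots,d_{j-1}>0$ by the claim applied inductively to the initial sub-problem), and move the offending term across the identity $y=\hat y$ to obtain
\[ y(t)+(-d_j)\,z(b_1,b_2,t-t_j) \;=\; \sum_{m\ne j} d_m\,z(b_1,b_2,t-t_m), \]
an equality between two nonnegative combinations of $z$-translates that must hold at every sample. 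Invoking Lemma~\ref{lem:intersect} pairwise on matched ``slow''--``fast'' single-impulse components, and iterating across the consecutive inter-sample intervals of $[t_1,t_K]$, yields an upper bound on the total number of coincidences of the two sides that is strictly smaller than the $K$ forced coincidences at the samples, producing the required contradiction.

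Case~2 ($b_1+b_2<b_1^*+b_2^*$, $b_1<b_1^*$) is handled by the mirror argument, interchanging the roles of ``fast'' and ``slow'' systems in Lemma~\ref{lem:intersect}. The conclusion $d_k<0$ is claimed only for $k\in S$ because whenever $d_k^*>0$ the true impulse already supplies a positive pulse that the estimate can reproduce with positive weight; only when $d_k^*=0$ does the slow/fast mismatch have no ``true'' term to absorb it, forcing a strictly negative estimated weight.

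The hard part, as I see it, is upgrading Lemma~\ref{lem:intersect}, which compares a single pair of single-impulse responses, into a statement about the multi-impulse identity encoded by $Y=\Phi(b_1,b_2)\hat\theta$. A clean reduction calls for a careful pairing or ordering of translates on the two sides of the rearranged identity so that each invocation of the lemma produces an independent zero-crossing bound, with the aggregate being strictly less than $K$. Making this counting fully rigorous, in particular handling inter-sample intervals containing several impulse times and the boundary behavior at $t_K$, and correctly exploiting the sparsity set $S$ in the second case, is where I expect the bulk of the technical work to lie.
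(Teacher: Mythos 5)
Your proof strategy (contradiction plus intersection counting) is genuinely different from the paper's, but it has a gap at exactly the point you flag as ``the hard part,'' and that gap is not a matter of bookkeeping --- the step as sketched does not go through. Lemma~\ref{lem:intersect} bounds the intersections of \emph{one} single-impulse response with \emph{one} single-impulse response. After you move $-d_j\,z(b_1,b_2,\cdot-t_j)$ to the left, both sides of your identity are sums of several translates (the left side already contains all the true-system translates inside $y$), and the number of zeros of a difference of such sums is not controlled by applying the lemma ``pairwise on matched slow--fast components'': the zeros of $\sum_i(f_i-g_i)$ are in general unrelated to the zeros of the individual $f_i-g_i$, so no aggregate coincidence bound below $K$ follows from the lemma as stated. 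Without a genuinely new argument here (a Descartes-rule / Chebyshev-system type result for sums of the exponentials involved, say), the contradiction is not established. The appeal to ``the claim applied inductively to the initial sub-problem'' can be repaired (since $\Phi$ is lower triangular the first $j$ weights depend only on the first $j$ samples), but the counting step is the crux and is missing.

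For comparison, the paper avoids this difficulty entirely by arguing constructively rather than by contradiction. It considers the estimated (faster) system driven by the true impulses, notes by monotonicity of $x_2$ in the parameters that its output lies strictly below $x_2^*$, and then increases the impulse weights one sample at a time so that the outputs agree at $t_2,t_3,\dots$; Lemma~\ref{lem:intersect} is used only to show that after matching at $t_{k+1}$ the estimated output again falls below the true one, so the next correction is again positive. Because this construction produces an exact fit at all samples and the exact-fit solution is unique ($\Phi$ invertible), the constructed positive-weight solution \emph{is} the LS solution. That single-impulse-pair use of the lemma is precisely what makes it applicable, which is what your multi-impulse rearrangement loses.
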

\begin{proof}
See Appendix~\ref{app:impsign}.
\end{proof}

Note that Lemma~\ref{lem:intersect} and Proposition~\ref{prop:impsign}  apply only when the initial state is zero, while the optimization formulation in \eqref{eq:opt} allows a nonzero initial value for $x_2$ (a nonzero initial $x_1$ can be represented by an impulse and is thus not included in the estimation). However, since the contribution from the initial state tends to zero exponentially, the proposition is expected to hold in the case of nonzero initial conditions too.

\subsection{Boundaries of the sign-definite impulse regions}\label{sec:boundary} 
The proposition above does not provide information about the region which is marked as unknown in the left subplot of Fig.~\ref{fig:optSol}. To gain understanding of the behavior in this region, consider a simplified case of three sampled measurements of the response to a single impulse at time $t=0$. Denote it as $y^*(t)$ and assume that it is generated by \eqref{eq:sys} with the parameter values $b_1^*$, $b_2^*$. Let $y(t)$ be the response of the same system but with parameters $b_1$, $b_2$. The boundary between solutions with positive and negative impulse weights is then defined by the case when $y(t)$ intersects $y^*(t)$ precisely at the sampling times (i.e. no additional positive or negative impulses are required to explain the behavior). If the curves intersect at the times $\tau,\nu, \mu$, the relation between $b_1$, $b_2$ and $b_1^*$, $b_2^*$ is given by a solution to the equation
\begin{multline}\label{eq:3sampl}
    (\chi^\tau-\psi^\tau)(\omega^{\mu}-\omega^{\nu}) + (\chi^\nu-\psi^\nu)(\omega^{\tau}-\omega^{\mu})\\
    + (\chi^\mu-\psi^\mu)(\omega^{\nu}-\omega^{\tau}) = 0,
\end{multline}
where $\chi=\e^{b_1 - b_2^*}$, $\psi=\e^{b_1 - b_1^*}$ and $\omega=\e^{b_1 - b_2}$.
Note that equation \eqref{eq:3sampl} has $b_1=b_2$ as another, infeasible solution. By solving \eqref{eq:3sampl} for $\omega$, an expression for $b_2$ would be obtained. However, since solving \eqref{eq:3sampl} algebraically for $\omega$ is not possible in a general case, only equidistant sampling, i.e. $\nu=\tau+c, \mu=\tau+2c$, where $c>0$, is considered. The feasible solution then becomes
\begin{multline*}
    b_2 = b_1 - \ln (\omega)\\
    = b_1 - \frac{1}{c}\ln \Big(\frac{\chi^{\tau+c}-\chi^{\tau+2c}-\psi^{\tau+c}+\psi^{\tau+2c}}{\chi^\tau-\chi^{\tau+c}-\psi^\tau+\psi^{\tau+c}}\Big)\\
    \triangleq b_1 - \frac{1}{c}\ln \Big(\frac{\omega_1}{\omega_2}\Big),
\end{multline*}
where the solution naturally is $b_2=b_2^*$ if $b_1=b_1^*$. Taking the derivative with respect to $b_1$ yields
\begin{equation*}
    \frac{d b_2}{d b_1} = -1 + (\chi^{\tau+c} - \psi^{\tau+c})\Big(\frac{1}{\omega_1}-\frac{1}{\omega_2}\Big).
\end{equation*}
Since $b_1<b_2$, it follows that $\chi < \psi$ and $0<\omega_1<\omega_2$. That leads to the inequality
\begin{equation*}
    -\infty < \frac{d b_2}{d b_1} < -1.
\end{equation*}
The second derivative, given by
\begin{multline*}
    \frac{d^2 b_2}{d b_1^2}=c \big(\chi^{\tau+c} - \psi^{\tau+c}\big)\\
    \times \Big(-\frac{1}{\omega_1}-\frac{1}{\omega_2} + \big(\chi^{\tau+c} - \psi^{\tau+c}\big) \Big(\frac{1}{\omega_1^2}-\frac{1}{\omega_2^2} \Big) \Big)>0,
\end{multline*}
shows that the derivative changes monotonously. Finally, if the samples are shifted in time relative to the impulse, $b_2$ changes according to
\begin{equation*}
    \frac{d b_2}{d\tau} = \frac{1}{c}(b_1-b_2)\psi^{\tau+c}(1-\psi^c) \big(\frac{1}{\omega_2}-\frac{1}{\omega_1}\big).
\end{equation*}
If $b_1<b_1^*$, then $\psi<1$, which makes the expression positive, while $b_1>b_1^*$ makes it negative, i.e. a shift in time causes a pivot of the curve around the point $(b_1^*,b_2^*)$. The resulting curves are illustrated in Fig. \ref{fig:bound}.

\begin{figure}
\begin{center}
\vspace{0.2cm}
\includegraphics[trim={0.7cm 0 0.7cm 0.4cm}, clip, scale=0.64]{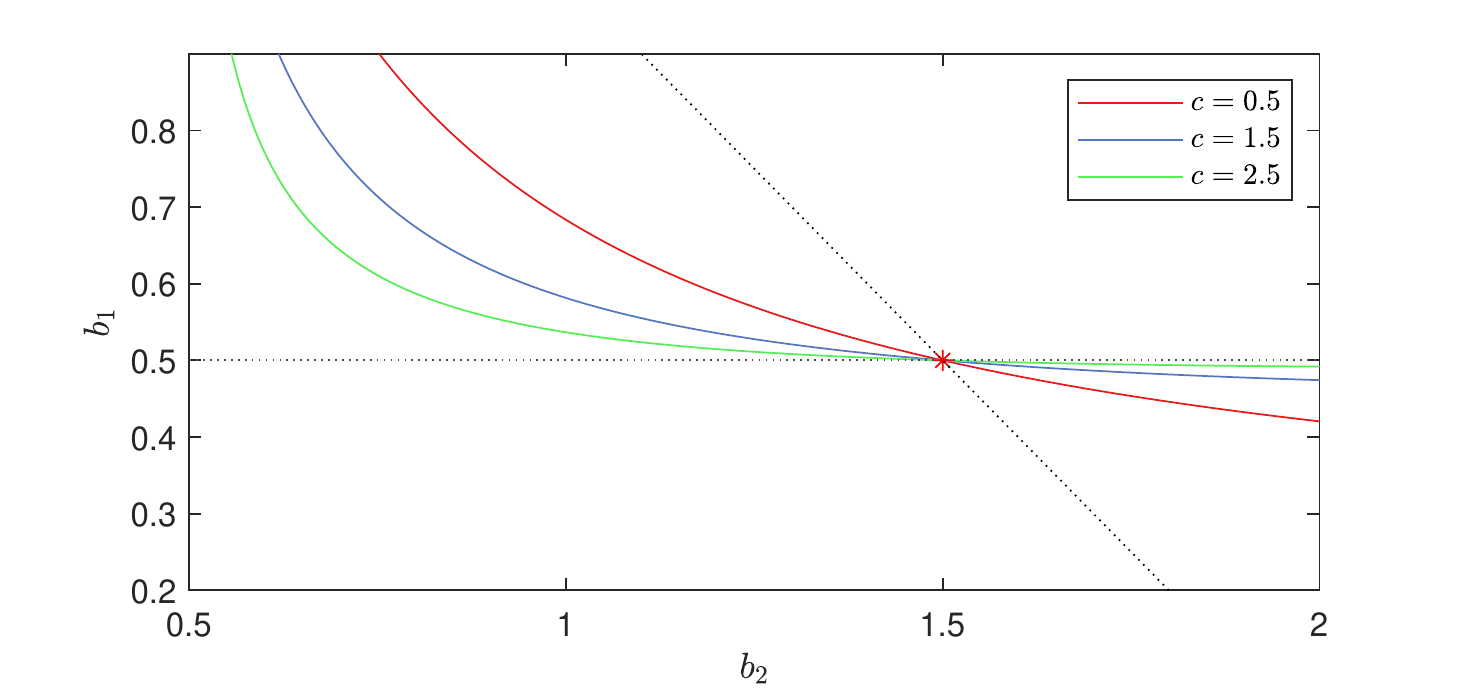}
\caption{The pairs of estimates $b_1, b_2$ resulting in intersections between $y(t)$ and $y^*(t)$ at times $1,1+c,1+2c$ for different values of $c$ and $b_1^*=0.5$, $b_2^*=1.5$. The dotted lines represent the theoretical limits for the borders given in Proposition \ref{prop:impsign}.}
\label{fig:bound}
\end{center}
\end{figure}

If more than three measurements are considered, each triplet of measurements  generates a separate $b_2$-boundary. For a non-equidistant triplet, the qualitative behavior of the corresponding solution to \eqref{eq:3sampl} appears to be similar to that of the equidistant case in numerical experiments. Since a solution with only positive impulses requires that parameter estimates are above all these boundaries, the limiting case, which defines the curve $\gamma_P$, corresponds to the largest value of $b_2$ for a given $b_1$. For solutions with almost all impulses being negative, the converse holds. Adding more impulses does not alter this behavior. As all boundaries intersect at $b_1=b_1^*$, $b_2=b_2^*$, and triplets with different time shifts produce different slopes according to the above discussion, this explains the non-smoothness of the boundaries at this point illustrated in Fig.~\ref{fig:optSol}. Note that the presented analysis only is valid if the impulses are sufficiently sparse, i.e. there are instances with at least three sampling times in between two impulses.

\subsection{Estimation algorithm}\label{sec:nonoiseest}
The following procedure is suggested to solve the estimation problem in the absence of noise. Let $d(c)$ denote the distance between the pair of points where the boundaries of the positive and negative impulse regions (see Fig.~\ref{fig:optSol}) intersect with the line $b_1=2b_2+c$. Since the boundaries meet at the point $(b_1^*,b_2^*)$, minimization of $d(c)$ with respect to $c$ recovers the values of $b_1^*$ and $b_2^*$.
The points of intersection which determine $d(c)$ can be calculated with the bisection method, by considering the signs of the estimated impulse weights obtained from solving \eqref{eq:opt} at points along the line $b_1=2b_2+c$. Once $b_1^*$ and $b_2^*$ are calculated, Algorithm $1$ in \cite{MM:13} is used to calculate the impulse times and weights.

No further details on this algorithm are provided, as only estimation under uncertainties is relevant in applications.

\section{ESTIMATION UNDER UNCERTAINTY}\label{sec:noisy}

\subsection{Estimation of $b_1$ and $b_2$}
To represent noise or model uncertainty, we consider the modification of \eqref{eq:matrixY}
\begin{equation*}
    Y = \Phi(b_1, b_2) \theta + \epsilon,
\end{equation*}
where $\epsilon$ is a zero-mean noise vector. The parameter space is then no longer divided as in Fig.~\ref{fig:optSol}, since the region of mixed impulse weights then covers a larger area and, in particular, includes $(b_1^*,b_2^*)$. A non-negativity constraint for the impulse weights is for this reason added to \eqref{eq:opt}. In the absence of noise, this would leave the region above $\gamma_P$ unaffected, while rendering the residual sum of squares
\begin{equation}
    \|Y - \Phi(b_1, b_2) \hat\theta(b_1, b_2) \|^2 \triangleq g(b_1,b_2)\label{eq:f2}
\end{equation}
nonzero in the rest of the parameter space. In particular, for a fixed value $b_2=\bar b_2$ such that $(\bar b_1,\bar b_2)$ is on $\gamma_P$, $g(b_1,\bar b_2)$ is expected to be decreasing in $b_1$ when $b_1<\bar b_1$. In the noisy setting, the qualitative behavior tends to be similar, but the residual sum is nonzero even for $b_1>\bar b_1$. We will utilize this property to estimate $\gamma_P$ using the following result.
\begin{lemma}\label{lem:quad}
Let $f(x) = c_1(x-x^*)^2+c_2$, where $c_1,c_2>0$. Define $N_{f}(x)$ and $\hat x$ by
\begin{equation*}
    N_{f}(x) = -\frac{f(x)}{df(x)/dx},
\end{equation*}
\begin{align*} 
    \hat x =& \argmin_{x} N_{f}(x) + \min_{x} N_{f}(x),\\
    \textrm{s.t. } & \min_{x} N_{f}(x)>0.
\end{align*}
Then $\hat x=x^*$.
\end{lemma}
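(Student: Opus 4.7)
The proof is a direct calculation built around the positivity constraint, which is what singles out the correct side of $x^*$. First I would substitute the quadratic $f(x)=c_1(x-x^*)^2+c_2$ into the definition of $N_f$, yielding
\[
N_f(x)=-\frac{c_1(x-x^*)^2+c_2}{2c_1(x-x^*)},
\]
which is defined for $x\neq x^*$. Since $c_1,c_2>0$, the numerator is strictly positive, so $N_f(x)>0$ exactly when $x-x^*<0$. Consequently the constraint $\min_x N_f(x)>0$ restricts the optimization to the half-line $x<x^*$; on the complementary half-line $N_f$ is negative and the constraint cannot be active.

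Next I would perform the change of variables $u=x^*-x>0$ to bring $N_f$ into the standard form
\[
N_f(x)=\frac{u}{2}+\frac{c_2}{2c_1 u},
\]
a sum of a linear term and a reciprocal on the positive axis. A single derivative (or the AM--GM inequality) shows that the unique minimum is attained at $u^\star=\sqrt{c_2/c_1}$ with minimum value $\sqrt{c_2/c_1}$. Reverting the substitution gives $\argmin_x N_f(x)=x^*-\sqrt{c_2/c_1}$ and $\min_x N_f(x)=\sqrt{c_2/c_1}$. Summing the two quantities produces $\hat x = (x^* - \sqrt{c_2/c_1}) + \sqrt{c_2/c_1} = x^*$, as claimed.

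The only delicate point, and hence the main obstacle, is parsing the statement correctly: the argmin and min must be taken over the feasible region where the positivity constraint can be satisfied, i.e., strictly to one side of $x^*$. Once this interpretation is fixed, the calculation is elementary. Structurally, the lemma holds because the quadratic shape of $f$ forces the linear and reciprocal pieces of $N_f$ to contribute equal amounts at the optimum, so the displacement $-\sqrt{c_2/c_1}$ of the minimizer from $x^*$ is cancelled exactly by the minimum value $\sqrt{c_2/c_1}$ when the two are added.
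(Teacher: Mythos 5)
Your proof is correct and matches the paper's intent exactly: the paper's own proof is just the remark ``Straightforward minimization,'' and your calculation (restricting to $x<x^*$ via the positivity constraint, substituting $u=x^*-x$, and applying AM--GM to $\tfrac{u}{2}+\tfrac{c_2}{2c_1u}$) is precisely the computation being left to the reader. Nothing further is needed.
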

\begin{proof}
Straightforward minimization.
\end{proof}
The minimization problem stated in the lemma is applied to the residual sum of squares $g(b_1,\bar b_2)$, i.e. the function is assumed to be approximately quadratic, somewhat similarly to Newton's method in optimization. The point of this technique is that the objective function only is required to be quadratic below $\gamma_P$, so the optimization can be performed even though $g(b_1,\bar b_2)$ does not have a unique minimum close to $\gamma_P$.

Generalizing to incorporate both $b_1$ and $b_2$ in the formulation, and constraining the permitted number of impulses (mimicking the effect of the constraint on $\min N_f$ above), we arrive at the formulation
\begin{equation}\label{eq:b1b2opt}
\begin{aligned}
    (\hat b_1, \hat b_2) =& \argmin_{b_1,b_2} N_{g}(b_1,b_2) + \min_{b_1,b_2} N_{g}(b_1,b_2),\\
    \textrm{s.t. } & \#\{d_k>d_\mathrm{min}\}\le \Pi,
\end{aligned}
\end{equation}
where 
\begin{equation*}
N_g=\frac{-g(b_1,b_2)}{\partial g(b_1,b_2) /\partial b_1},
\end{equation*}
$\Pi$ is the maximal permitted number of impulses, $d_\mathrm{min}$ is the threshold for counting an impulse and $\#$ denotes cardinality. It should be noted that the cost function, together with the impulse number constraint, constitute a nonconvex optimization problem which admits multiple local minima.

It is however not obvious that a solution to \eqref{eq:b1b2opt} approximates $(b_1^*,b_2^*)$, and not some other point on $\gamma_P$. Indeed, the numerical experiments in the next section demonstrate that the estimation works well with relatively low levels of noise and a frequent sampling, but with a higher noise level, only an estimate $\hat \gamma_P$ of this boundary can be found.
The corresponding optimization formulation
then becomes
\begin{equation}\label{eq:b1opt}
\begin{aligned}
    \bar b_1 =& \argmin_{b_1} N_{g}(b_1,\bar b_2) + \min_{b_1} N_{g}(b_1,\bar b_2),\\
    \textrm{s.t. } & \#\{d_k>d_\mathrm{min}\}\le \Pi,
\end{aligned}
\end{equation}
where $(\bar b_1,\bar b_2)$ is the intersection between $\hat \gamma_P$ and $b_2=\bar b_2$.

\subsection{Estimation algorithms}
As the optimization problems \eqref{eq:b1b2opt}, \eqref{eq:b1opt} are non-convex, we use gridding to solve them, utilizing the finite difference over the grid points to approximate the derivative in $N_g$. That means that, in the low noise case, the parameter combination in the grid which minimizes $N_g$ is used in the calculation, while in the high noise case, for each $b_2$-grid point, the minimizing $b_1$ is used. Finally, to determine the location and weights of the impulses, all impulses below a user-defined threshold are removed and adjacent impulses are merged. The resulting procedure is summarized in Algorithm~\ref{alg:est}.
\begin{algorithm}
\begin{algorithmic}[1]
\State Calculate $(\hat b_1,\hat b_2)$ from \eqref{eq:b1b2opt} (low noise) \textit{or} solve \eqref{eq:b1opt} for $\bar b_1=\hat b_1$ for a given $\bar b_2=\hat b_2$ (high noise)
\State Calculate $\hat \theta(\hat b_1,\hat b_2)$ from \eqref{eq:opt}
\State Let $S=\{k\in\{1,\dots,K\} \mid \hat d_k<d_\mathrm{min}\}$
\State Solve \eqref{eq:opt} with all $d_k$ with $k\in S$ constrained to be zero
\State Merge adjacent non-zero impulses according to Algorithm $1$ in \cite{MM:13}
\caption{Impulse and time constant estimation}\label{alg:est}
\end{algorithmic}
\end{algorithm}

\section{NUMERICAL EXPERIMENTS}\label{sec:numexp}
The proposed estimation technique is evaluated on synthetic data consisting of the response to three impulses with additive zero-mean Gaussian measurement noise. Two Monte Carlo experiments with $100$ realizations were performed; one with low noise variance (experiment $A$) and one with a more realistic (i.e. higher) noise level (experiment $B$). In the former case, a comparison is made with the implementation in \cite{MM:13}. The parameters of the experiments are specified in Table~\ref{tab:MCpars}, while examples of data realizations are shown in Fig. \ref{fig:outputEx}. The parameter values of the estimation algorithms are given in Table~\ref{tab:estpars}.

\begin{figure}
\begin{center}
\vspace{0.2cm}
\includegraphics[trim={0.7cm 0 0.7cm 0.4cm}, clip, scale=0.64]{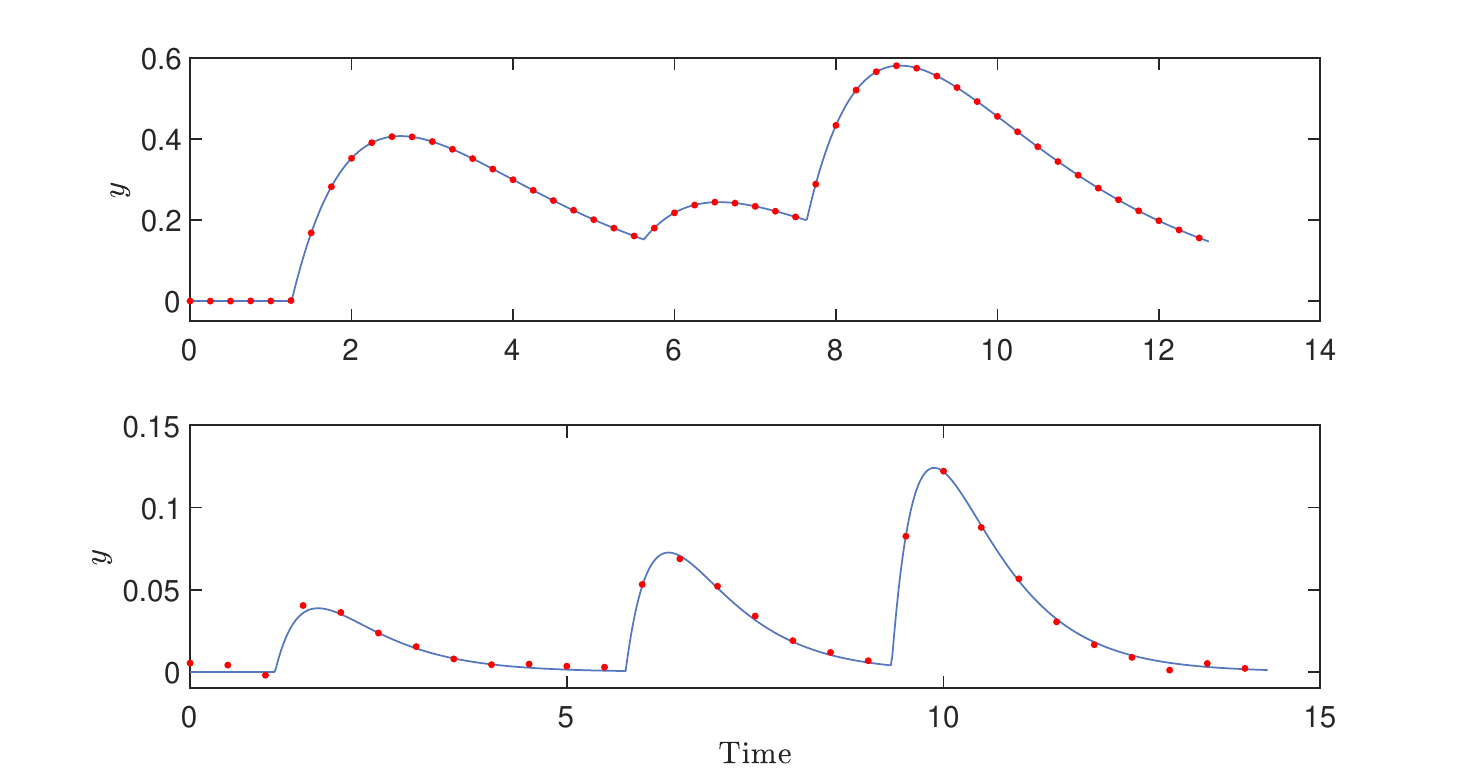}
\caption{Examples of sampled synthetic data with low (top) and high (bottom) levels of additive measurement noise.}
\label{fig:outputEx}
\end{center}
\end{figure}

\begin{table}[ht]
\caption{Monte Carlo experiment parameter distributions (left) and fixed values (right). $U_{[\cdot,\cdot]}$ denotes uniform distribution, $\Delta \tau_i$ and $\Delta t_i$ are the time separation between consecutive impulses and samples respectively, $\sigma$ is the noise standard deviation and $\tau_\mathrm{end}$ is the time between the last impulse and the end of the time horizon.}
\label{tab:MCpars}
\begin{center}
\begin{tabular}{|c|c|}
\hline
$b_1^*$ & $U_{[0.4,1.4]}$ \\
$b_2^*-b_1^*$ & $U_{[0.3,1.3]}$ \\
$d_i^*$ & $U_{[0.1,1]}$ \\
$\Delta \tau_i$ & $U_{[1,5]}$ \\
\hline
\end{tabular}
\quad
\begin{tabular}{|c|cc|}
\hline
 & $A$ & $B$ \\
\hline
$\sigma$ & $\num{2e-4}$ & $0.0015$ \\
$\Delta t_i$ & $0.25$ & $0.5$ \\
$\tau_\mathrm{end}$ & 5 & 5 \\
\hline
\end{tabular}
\end{center}
\end{table}

\begin{table}[ht]
\vspace{0.2cm}
\caption{Estimation algorithm parameters. $\Delta b$ is the distance between the grid points and $\bar d$ denotes the mean impulse weight.}
\label{tab:estpars}
\begin{center}
\begin{tabular}{|c|c|}
\hline
$b_1$ range & $[0.5 b_1^*,0.5(b_1^*+b_2^*))$ \\
$b_2$ range & $[0.5(b_1^*+b_2^*),1.5b_2^*]$\\
$\Delta b$ & 0.02 \\
$d_\mathrm{min}$ & $0.05\bar d$ \\
$\Pi$ & $0.5K$ \\
\hline
\end{tabular}
\end{center}
\end{table}

\subsection{Estimation from low noise-data}
The ``low noise'' version of Algorithm \ref{alg:est} was used on synthetic data with additive measurement noise of low variance. The results are compared with the $\ell_1$-constrained estimation algorithm in \cite{MM:13}, implemented with the same $b_1$-$b_2$-grid. To account for the regularization that is required in that method, gridding was also performed over possible values of the regularization parameter $\lambda_\mathrm{max}$, and the Akaike information criterion was used to determine its value. As displayed in Table~\ref{tab:1stand2nd}, the current implementation performs better.

\begin{table}[ht]
\caption{Root mean squared errors for parameters estimated using Algorithm~\ref{alg:est} ($A1$) and regularized LS \cite{MM:13} ($A2$).}
\label{tab:1stand2nd}
\begin{center}
\begin{tabular}{|c|cc|}
\hline
 & $A1$ &  $A2$ \\
\hline
$b_1$ & $0.0105$ & $0.0234$ \\
$b_2$ & $0.0255$ & $0.0582$ \\
$d_i$ & $0.0164$ &  \\
$\tau_i$ & $0.0745$ &  \\
\hline
\end{tabular}
\end{center}
\end{table}

The estimated input was also evaluated. In $78$\% of the realizations, a correct number of impulses were estimated whereas the remaining had an average of $1.77$ extra impulses caused by the noise. The impulses corresponding to the true input sequences were identified, and the resulting estimation errors are given in Table \ref{tab:1stand2nd}. Both the timing and weights of the impulses are estimated with satisfactory accuracy.

\subsection{Estimation from realistic data}
Here the ``high noise'' version of Algorithm \ref{alg:est} was employed on data with a higher noise level, and less frequent sampling than the previous case, over a grid of $b_2$-values. In Fig.~\ref{fig:2ndorderboundary}, the estimate $\hat \gamma_P$ for one data set is displayed, together with the true parameter values and $\gamma_P$. The resulting average Euclidean distance between $\hat \gamma_P$ and $(b_1^*,b_2^*)$ over all realizations is $0.0122$. The curve evidently tends to be close to the true parameters, but a strategy to obtain the best estimate along this curve has not been found.

\begin{figure}[ht]
\begin{center}
\vspace{0.2cm}
\includegraphics[trim={0.7cm 0 0.7cm 0.4cm}, clip, scale=0.64]{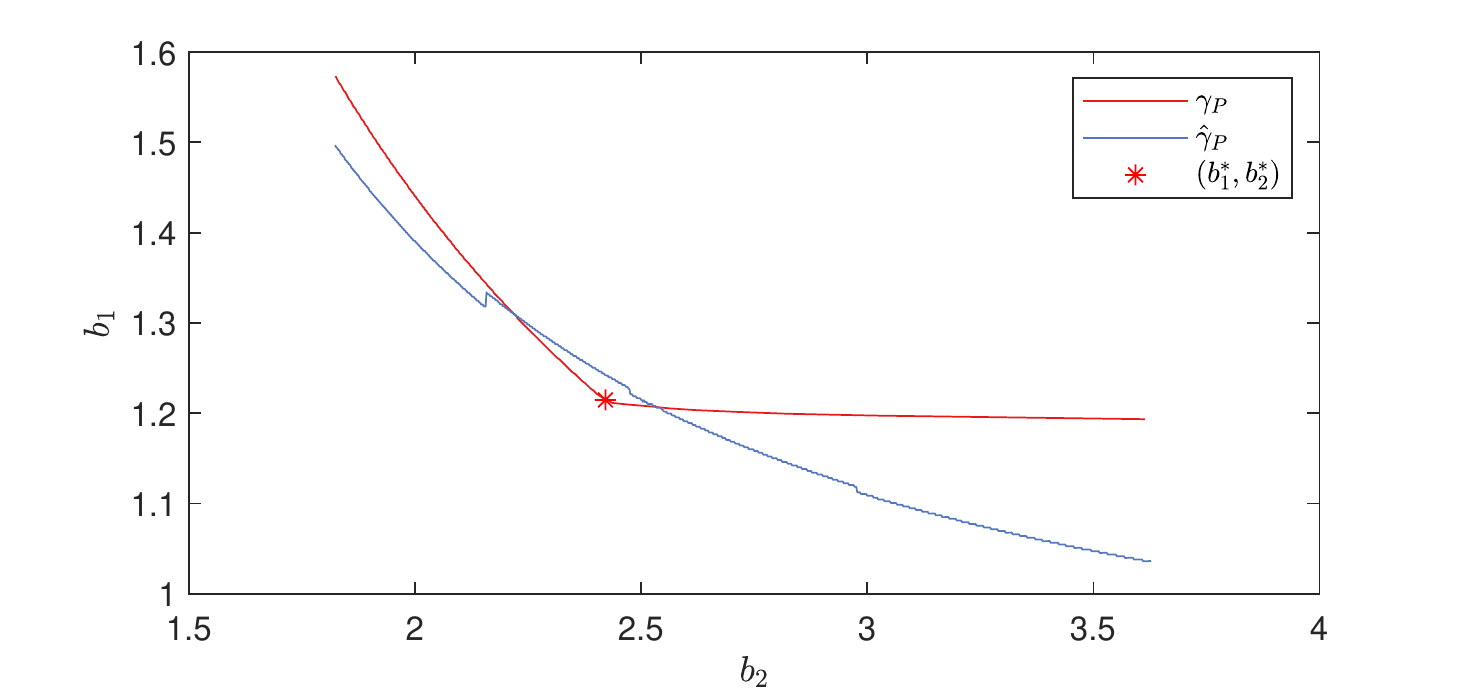}
\caption{The $\gamma_P$-boundary, its estimate $\hat \gamma_P$ generated using Algorithm \ref{alg:est} (with a higher grid resolution for improved visual appearance), and the true parameter values for one synthetic data realization.}
\label{fig:2ndorderboundary}
\end{center}
\end{figure}

\section{CONCLUSIONS}
A novel estimation technique for a class of continuous second-order systems with impulsive input has been presented. It builds upon previous work \cite{MM:13}, but outperforms that method on synthetic data with low noise and also has the advantage of not requiring any data-dependent user-defined parameters. Under strong uncertainty, only an implicit relation between the plant parameters can be determined. It is hypothesized that under such circumstances, unique parameter values generally cannot be reliably estimated. A more thorough analysis of this issue, and its implications on estimations from clinical data, are possible future research directions.



\appendices

\section{Proof of Lemma \ref{lem:intersect}} \label{app:intersect}
Let $\hat x_i, i=1,2$ denote the states of the estimated system. First, consider the case of the two outputs being tangent
for some $\tau>\max \{t_1, \hat t_1\}$, i.e. $x_2(\tau)=\hat x_2(\tau)$ and $\dot x_2(\tau) = \dot{\hat x}_2(\tau)$. The dynamics for $x_2$ and $\hat x_2$ then gives
\begin{equation} \label{eq:x1hat}
    \hat x_1(\tau) = x_1(\tau)+(\hat b_2 - b_2) x_2(\tau),
\end{equation}
which is used to calculate second derivative:
\begin{equation}\label{eq:x2dotdot}
\begin{aligned}
        \ddot{\hat x}_2(\tau)-\ddot x_2(\tau) = \dot{\hat x}_1(\tau) - \hat b_2 \dot{\hat x}_2(\tau) - \dot x_1(\tau) + b_2 \dot x_2(\tau)\\
    = (b_1 - \hat b_1 + b_2 - \hat b_2) x_1(\tau) + (b_2-\hat b_1) (\hat b_2 - b_2) x_2(\tau).
\end{aligned}
\end{equation}
Two separate cases establish the sign of this expression:
\begin{itemize}
    \item If $\hat b_2 - b_2 \le 0$ then $(b_2-\hat b_1) (\hat b_2 - b_2) \le 0$,
    so $\ddot{\hat x}_2(\tau)-\ddot x_2(\tau)$ is negative since $(b_1 - \hat b_1 + b_2 - \hat b_2)<0$;
    \item If $\hat b_2 - b_2 > 0$, then \eqref{eq:x1hat} implies $\hat x_1(\tau)>x_1(\tau)$ and thus $\dot{\hat x}_1(\tau) = -\hat b_1 \hat x_1(\tau) < -b_1 x_1(\tau) = \dot x_1(\tau)$. Applying this to the first row of \eqref{eq:x2dotdot} gives $\ddot{\hat x}_2(\tau)-\ddot x_2(\tau)<0$.
\end{itemize}
It follows that $\hat x_2(t) \le x_2(t)$ in a neighborhood of $\tau$.


Now assume that $x_2(\nu)=\hat x_2(\nu)$, for some $\nu \ne \tau$ and that there are no intersections between $x_2(t)$ and $\hat x_2(t)$ for $t$ between $\tau$ and $\nu$ (otherwise use that intersection to define $\nu$). Using \eqref{eq:x1hat} and the dynamics of the first states
\begin{equation*}
    x_1(t) = d_1 \e^{-b_1 (t-t_1)}, \quad \hat x_1(t) = \hat d_1 \e^{-\hat b_1 (t-\hat t_1)},
\end{equation*}
an expression for $\dot{\hat x}_2(\nu) - \dot x_2(\nu)$ can be calculated as
\begin{multline*}
    \dot{\hat x}_2(\nu) - \dot x_2(\nu) = (b_2-\hat b_2) x_2(\nu)+\hat x_1(\nu)-x_1(\nu)\\
    = (\hat b_2 - b_2) \big(x_2(\tau) \e^{-\hat b_1(\nu-\tau)} - x_2(\nu)\big)\\
    + x_1(\tau)\big(\e^{-\hat b_1(\nu-\tau)} - \e^{-b_1(\nu-\tau)}\big)\\
    = \frac{d_1}{b_1-b_2}\Big((\hat b_2-b_2)\e^{-b_2\nu}(\e^{(\nu-\tau)(b_2-\hat b_1)}-1)\\
    +(b_1-\hat b_2)\e^{-b_1\nu}(\e^{(\nu-\tau)(b_1-\hat b_1)}-1)\Big).
\end{multline*}
Since $\dot{\hat x}_2(\nu) - \dot x_2(\nu)=0$ for $\nu=\tau$, the sign is established by the derivative
\begin{multline*}
    \frac{d}{d\tau}(\dot{\hat x}_2(\nu) - \dot x_2(\nu))\\
    = \frac{d_1}{b_1-b_2}\Big((\hat b_2-b_2)(\hat b_1-b_2)\e^{-\hat b_1(\nu-\tau)-b_2 \tau}\\
    + (b_1-\hat b_2)(\hat b_1-b_1)\e^{-\hat b_1(\nu-\tau)-b_1 \tau}\Big).
\end{multline*}
Now use $\hat b_2-b_2 > b_1-\hat b_1$, $\hat b_1 -b_2 > b_1-\hat b_2$ to get
\begin{multline*}
    \frac{d}{d\tau}(\dot{\hat x}_2(\nu) - \dot x_2(\nu))\\
    >  \frac{d_1}{b_1-b_2}\Big((b_1-\hat b_1)(b_1-\hat b_2)\e^{-\hat b_1(\nu-\tau)-b_2 \tau}\\
    + (b_1-\hat b_2)(\hat b_1-b_1)\e^{-\hat b_1(\nu-\tau)-b_1 \tau}\Big)\\
     = \frac{d_1}{b_1-b_2}(b_1-\hat b_1)(b_1-\hat b_2)\e^{-\hat b_1(\nu-\tau)}\big(\e^{-\tau b_2} - \e^{-\tau b_1} \big)>0.
\end{multline*}

This implies that $\dot{\hat x}_2(\nu) - \dot x_2(\nu)$ is positive when $\tau>\nu$, which in turn implies that $\hat x_2(t)>x_2(t)$ for $t>\nu$ and close to $\nu$. Since $x_2(t)$ and $\hat x_2(t)$ are continuous and there are no intersections between $x_2(t)$ and $\hat x_2(t)$ for $\nu<t<\tau$, this is contradictory with $\hat x_2(t) \le x_2(t)$ for $t$ in a neighborhood of $\tau$. Since the case $\tau<\nu$ leads to a contradiction in the same way, one can conclude that if $x_2(\tau)=\hat x_2(\tau)$ and $\dot x_2(\tau) = \dot{\hat x}_2(\tau)$, a $\nu \ne \tau$ such that $x_2(\nu)=\hat x_2(\nu)$ does not exist.

Now suppose that there are more than two intersections between $x_2(t)$ and $\hat x_2(t)$. Since $\hat x_2(t)$ depends linearly on $\hat d_1$, it is then possible to reduce this weight until two of the intersections are reduced to one tangent point, while other intersections still exist (or possibly also are reduced to tangent points). But in the tangent case, there can be no other intersections between the curves, so there cannot be more than two intersections between $x_2(t)$ and $\hat x_2(t)$.

\section{Proof of Proposition \ref{prop:impsign}}\label{app:impsign}
We only show the case $b_1+b_2>b_1^*+b_2^*$ and $b_1>b_1^*$, as a similar technique can be used in the other case.

Let $x_i^*,i=1,2$ and $x_i,i=1,2$ respectively denote the states of the true and  the estimated system. Consider the output of the estimated system, if it were driven by the same impulses as the true system. Since $x_2(t)$ depends monotonously on both $b_1-b_2$ and $b_2$, it then follows that $x_2(t)<x_2^*(t)$ for all $t>t_1$. Since $x_2(t)$ depends linearly on the impulse weights, the weight of the first impulse $d_1$ can be increased so that $x_2(t_2)=x_2^*(t_2)$. Utilizing the asymptotic behavior of the systems and Lemma \ref{lem:intersect}, it can be shown that in this case $x_2(t)<x_2^*(t)$ for $t>t_2$. Now apply the same technique for the whole optimization horizon, i.e. at every $t_k$, increase the impulse weight (generally from zero) so that $x_2(t_{k+1})=x_2^*(t_{k+1})$. Since this results in a solution that is optimal, and the solution is unique (see \cite{MM:13}), it follows that all estimated impulses are positive.


\bibliographystyle{IEEEtran}
\bibliography{IEEEabrv,refs}

\end{document}